\newtheorem{definition}{Definition}
\newtheorem{theorem}{Theorem}
\newtheorem{lemma}{Lemma}
\newtheorem{proposition}{Proposition}
\let\NAT@parse\undefined
\newtheorem{innerexample}{Example}
\newcommand{\abs}[1]{\left\lvert#1\right\rvert}
\newcommand{\R}{\mathbb{R}}
\popQED\end{innerexample}}
\title{\LARGE \bf
Modular Design of Strict Control Lyapunov Functions for Global Stabilization of the Unicycle in Polar Coordinates
}
\author{Velimir Todorovski, Kwang Hak Kim, and Miroslav Krstić
\thanks{This work was supported in part by the Office of Naval Research under Grant No. N00014-23-1-2376, in part by the Air Force Office of Scientific Research under Grant No. FA9550-23-1-0535, and in
part by the National Science Foundation under Grant No. ECCS-2151525. The results and opinions in this paper are solely of the authors and do not reflect the position or the policy of the U.S. Government or the National Science
Foundation.}
\thanks{V. Todorovski, K. Kim,  and M. Krstić are with the Department of Mechanical and Aerospace Engineering, UC San Diego, 9500 Gilman Drive, La Jolla, CA, 92093-0411, {\tt\small \{vtodorovski,kwk001,krstic\}@ucsd.edu}}%
}
\begin{document}

\maketitle
\thispagestyle{empty}
\pagestyle{empty}

\begin{abstract}
Since the mid-1990s, it has been known that, unlike in Cartesian form where Brockett’s condition rules out static feedback stabilization, the unicycle is globally asymptotically stabilizable by smooth feedback in polar coordinates. In this note, we introduce a modular framework for designing smooth feedback laws that achieve global asymptotic stabilization in polar coordinates. These laws are bidirectional, enabling efficient parking maneuvers, and are paired with families of strict control Lyapunov functions (CLFs) constructed in a modular fashion. The resulting CLFs guarantee global asymptotic stability with explicit convergence rates and include barrier variants that yield “almost global’’ stabilization, excluding only zero-measure subsets of the rotation manifolds. The strictness of the CLFs is further leveraged in our companion paper, where we develop inverse-optimal redesigns with meaningful cost functions and infinite gain margins.
\end{abstract}


\section{Introduction}

The unicycle, though globally controllable, cannot be asymptotically stabilized by any continuous or discontinuous time-invariant feedback, as shown by Brockett \cite{brockett1983asymptotic} and extended by Ryan, as well as Coron and Rosier \cite{ryan1994brockett,coron1994relation}. The Brockett–Ryan-Coron–Rosier conditions highlight fundamental topological obstructions to stabilizing nonholonomic systems.

This challenge has motivated diverse strategies to bypass the Brockett–Ryan-Coron–Rosier obstruction. One line of work introduces time-varying feedback, achieving global asymptotic stabilization of the unicycle \cite{samson1990velocity,coron1992global_controllable,pomet1992explicit,d2019small}, but such controllers are sensitive to delays, prone to synchronization issues, and often produce transient oscillations. 
%
Another line leverages hysteresis-based hybrid feedback with robustness refinements \cite{hespanha1999_hybrid_stabilization,prieur2003robust,prieur2005robust}, though these often cause zig-zagging motions, demand high actuation effort, and are difficult to implement in practice.

An alternative approach relies on coordinate transformations that introduce singularities in the transformed coordinates. 
A notable example is Astolfi’s $\sigma$-processes \cite{astolfi1995exponential,astolfi1996discontinuous}, which achieve continuous exponential stabilization in the transformed coordinates, though the transformation excludes certain initial conditions—such as states on the $x$-axis or aligned with the target.
Perhaps, a more physically meaningful transformation is to polar coordinates \cite{astolfi1999exponential}, which naturally encodes distance and relative heading to the target. Using this representation, Aicardi et al.\cite{aicardi1995} design a passivity-based feedback law with bidirectional forward velocity that achieves global asymptotic stabilization of the unicycle. Stability is established using a non-strict Lyapunov function and Barbalat’s lemma, precluding constructive $\mathcal{KL}$- estimates. Building on this work, Han and Wang\cite{wang24_force_controlled_safestable} introduce a strict Lyapunov function valid on large compact sets, while Restrepo et al.~\cite{restrepo2020leader} achieve global exponential stability through a backstepping design with a strict CLF. However, their approach sacrifices modularity—complicating extensions to barrier CLFs—and restricts the unicycle to unidirectional motion, potentially leading to less efficient parking trajectories.

In this work, we develop multitudes of feedback laws in polar coordinates for the parking problem of unicycle vehicles with bidirectional forward velocity. The design follows a modular structure: the forward velocity is selected to decouple the distance state from the angular dynamics, which are subsequently regulated by steering through passivity and backstepping techniques. The control laws achieve global asymptotic stabilization, supported by strict CLFs built in a composite fashion that expand the variety of attainable Lyapunov function designs and provide constructive $\mathcal{KL}$-estimates of convergence rates. Within this modular framework, barrier CLFs are developed to confine states to meaningful intervals, preventing angular wind-up and enforcing safety constraints. 
 We note that due to space restrictions, some proofs have been omitted and will be included in the extended journal version of this paper.

\raggedbottom
\section{Modular Feedback Design for the Unicycle}
We consider the unicycle model 
$\dot{x} =  v \cos\theta, \quad 
\dot{y} = v \sin\theta,
\quad \dot{\theta} = \omega \,,$
where $(x(t),y(t)) \in \R^2$ is the position of the unicycle in Cartesian coordinates, $\theta(t) \in \mathbb{R}$ is the heading angle, $v$ is the forward, and $\omega$ is the angular velocity input. The unicycle can be represented in polar coordinates with the transformation given in Fig.~\ref{fig:unicycle_cord} as
\begin{subequations}\label{eq:unicycle_polar_closed_loop-Gv-1}
    \begin{align}
    \dot{\rho} &= 
    -v \cos\gamma\label{eq:unicycle_polar_rhodot}\\
    \dot{\delta} &= v  \frac{\sin\gamma}{\rho}\label{eq:unicycle_polar_deltadot}\\
    \dot{\gamma} &= v \frac{\sin\gamma}{\rho} -\omega \label{eq:unicycle_polar_gammadot} \,.
    \end{align} 
\end{subequations}
The polar coordinates are defined as the distance \(\rho=\sqrt{x^2+y^2}\), the polar angle \(\delta=\text{{\rm atan2}}(y,x)+\pi\), and the line-of-sight (LoS) angle \(\gamma=\delta-\theta\). 
The unicycle model 
in Cartesian coordinates admits no continuous stabilizer due to the Brockett-Ryan–Coron–Rosier conditions. In contrast, its polar representation \eqref{eq:unicycle_polar_closed_loop-Gv-1} avoids this obstruction (due to singularity $\rho=0$), enabling smooth static feedback laws for $\rho>0$. However, when expressed back in Cartesian coordinates, discontinuities persist, notably from $\text{atan2}(y,x)$ along ${x<0,y=0}$.

\label{sec-preliminaries}
\paragraph{Forward velocity feedback}
\label{sec:forward_velocity_feedback}
For stabilization alone of the system \eqref{eq:unicycle_polar_closed_loop-Gv-1}, it is not evident that one can choose a feedback law that is either significantly better performing or simpler than the feedback
\begin{equation}
\label{eq-basic-v-control}
\fbox{$v =  k_1 \rho \cos\gamma$} = -k_1(x\cos\theta+y\sin\theta)\,,  \quad k_1>0\,.
\end{equation}
The feedback \eqref{eq-basic-v-control} yields the system consisting of the closed-loop subsystem 
\begin{align}
\label{eq:unicycle_polar_closed_loop-Gv-2}
\dot{\rho} &=  
- k_1 \rho \cos^2(\gamma)\,,
\end{align} 
for which $\rho(t)$ exponentially converges to zero when $\gamma\neq 0$, 
and, on the set $\rho>0$, at which the cancellation $\rho/\rho=1$ is valid, the subsystem 
\begin{subequations}
\label{eq:unicycle_polar_closed_loop-Gv-3}
\begin{align}
\dot{\delta} &=  \frac{k_1}{2} \sin(2\gamma) \label{eq:unicycle_polar_closed_loop-Gv-3n} \\
\dot{\gamma} &= \frac{k_1}{2} \sin(2\gamma) -\omega \label{eq:unicycle_polar_closed_loop-Gv-3a} \,,
\end{align} 
\end{subequations}
for which a feedback law depending only on $(\delta,\gamma)$, and applied by the steering input $\omega$, needs to be designed. 
By inspection of \eqref{eq:unicycle_polar_closed_loop-Gv-3a}, it is clear that, when the vehicle is not steered, namely, when $\omega=0$, the $\gamma$-subsystem is unstable at the equilibrium $\gamma=0$ and exponentially stable at the equilibrium $\gamma = \pm\pi/2$ (modulo $2\pi$). In other words, the vehicle, under the basic forward velocity feedback \eqref{eq-basic-v-control}, moves along a straight line and settles at a position at which the target position is exactly to its left or to its right, at a distance that depends on the initial conditions $(\rho_0,\gamma_0)$, and does so irrespective of the target heading. The consequence of these observations is that the term $\frac{k_1}{2} \sin(2\gamma)$ in \eqref{eq:unicycle_polar_closed_loop-Gv-3a} is destabilizing and it cannot be tolerated. We cancel this term and introduce a control law
\begin{equation}
\label{eq-omega-general}
\fbox{$\omega = \dfrac{k_1}{2} \sin(2\gamma) +\tilde\omega$}
\end{equation}
where the control $\tilde\omega$ has to be designed for the system
\begin{subequations}
\label{eq:unicycle_polar_closed_loop-Gv-3-pass}
\begin{eqnarray}
\label{eq:unicycle_polar_closed_loop-Gv-3n-pass}
\dot{\delta} &=&  \frac{k_1}{2} \sin(2\gamma)\\ \label{eq:unicycle_polar_closed_loop-Gv-3a-pass}
\dot{\gamma} &=& -\tilde\omega\,.
\end{eqnarray}
\end{subequations}


\begin{figure}[t]
\centering
\includegraphics[width=0.5\linewidth]{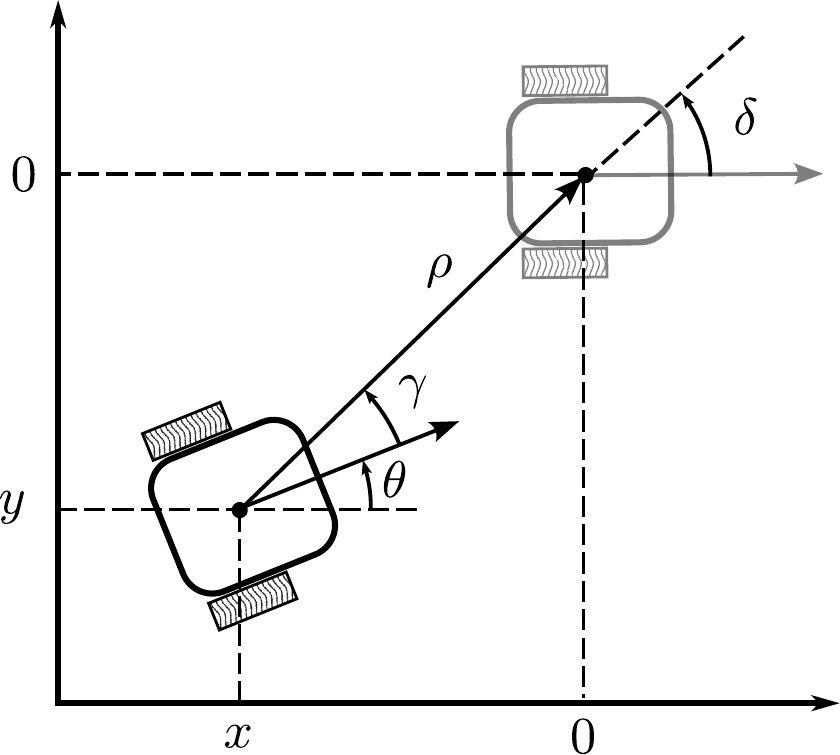}
\caption{Unicycle orientation \((x,y,\theta)\) with respect to the goal state \((0,0,0)\), along with the polar coordinate transformation \((x,y,\theta)\mapsto (\rho,\delta,\gamma)\). 
}
\label{fig:unicycle_cord}
\end{figure}
\paragraph{State spaces and stability definitions}
Before presenting the steering laws $\tilde{\omega}$, we introduce the state spaces 
\begin{equation}
\mathcal{S} := 
\left\{ \rho> 0\right\}\times\mathcal{T}\,,\quad \mathcal{T}:= \left\{ \delta \in\mathbb{R}, \gamma \in\mathbb{R}  \right\} 
\label{eq:ss_S}
\end{equation}
\begin{equation}
    \mathcal{S}_1 := \left\{ \rho> 0\right\}\times\mathcal{T}_1\,,\quad \mathcal{T}_1:= \left\{  \abs{\delta} <  \pi, \gamma \in\mathbb{R} \right\} \  \label{eq:ss_S1}
\end{equation}
\begin{equation}
   \mathcal{S}_2 := \left\{ \rho> 0\right\}\times\mathcal{T}_2\,,\quad \mathcal{T}_2:= \left\{ \delta \in\mathbb{R}, \abs{\gamma} < \pi \right\}\  \label{eq:ss_S2} 
\end{equation}
\begin{equation}
    \mathcal{S}_3 := \left\{ \rho> 0\right\}\times\mathcal{T}_3\,,\quad \mathcal{T}_3:= \left\{ \abs{\delta} < \pi, \abs{\gamma} < \pi\right\} \label{eq:ss_S3}
\end{equation}
 with the associated metrics
\begin{align*}
|(\rho,\delta,\gamma)|_{\mathcal{S}}   &\coloneqq \rho+|(\delta,\gamma)|_{\mathcal{T}} 
      :=  \rho + |\delta| + |\gamma|  
      \\
|(\rho,\delta,\gamma)|_{\mathcal{S}_1} &\coloneqq \rho+|(\delta,\gamma)|_{\mathcal{T}_1} 
      := \rho+ 2\tan\!\frac{|\delta|}{2} + |\gamma|  
      \\
|(\rho,\delta,\gamma)|_{\mathcal{S}_2} &\coloneqq \rho+|(\delta,\gamma)|_{\mathcal{T}_2} 
      := \rho+ |\delta| + 2\tan\!\frac{|\gamma|}{2}  
      \\
|(\rho,\delta,\gamma)|_{\mathcal{S}_3} &\coloneqq \rho+|(\delta,\gamma)|_{\mathcal{T}_3} 
      := \rho+2\tan\!\frac{|\delta|}{2} + 2\tan\!\frac{|\gamma|}{2}. 
\end{align*}


\begin{definition}
\label{def-our-GAS}
Consider the system \eqref{eq:unicycle_polar_closed_loop-Gv-2}, \eqref{eq:unicycle_polar_closed_loop-Gv-3-pass} with a feedback law $\tilde\omega(\gamma,\delta)$ that is continuous on a state space $\mathcal{Q}$ with respect to its metric. 
If there exists a class $\mathcal{KL}$ function $\beta$ such that, for all $t\geq 0$, it holds that $|(\rho(t),\delta(t), \gamma(t))|_{\mathcal{Q}}\leq \beta\left(|(\rho_0,\delta_0, \gamma_0)|_{\mathcal{Q}},t\right)$, we say that the 
point $\rho=\delta = \gamma=0$ is {\em globally asymptotically stable on $\mathcal{Q}$} (GAS on $\mathcal{Q}$). 
\end{definition}

With this definition, expressed explicitly in terms of a class $\mathcal{KL}$ bound, we circumvent the problem that stability is conventionally defined for equilibria away from the boundary of the state space. 

\paragraph{Composite Lyapunov functions}
In order to show the stability properties defined above, we generate a multitude of CLFs in the state $(\rho,\delta,\gamma)$. These CLFs are built ``modularly,'' since the system \eqref{eq:unicycle_polar_closed_loop-Gv-3-pass} with $\tilde\omega(\delta,\gamma)$ is independent of $\rho$ and, hence, the CLF for the $(\delta,\gamma)$-subsystem \eqref{eq:unicycle_polar_closed_loop-Gv-3}, denoted $V_{\delta\gamma}(\delta,\gamma)$, depends only on  $(\delta,\gamma)$, whereas the CLF for \eqref{eq:unicycle_polar_closed_loop-Gv-2}
naturally depends only on $\rho$, with a nonzero $\gamma(t)$ regarded simply as a distrubance. 
The following proposition gives examples of modularly designed CLFs $V(\rho,\delta,\gamma) = \mathcal{V}(\rho^2,V_{\delta\gamma})$, where, due to the scalar nature of \eqref{eq:unicycle_polar_closed_loop-Gv-2}, its CLF is taken, w.l.o.g, as $\rho^2$. 

\begin{proposition}[{Composite Lyapunov functions}] 
\label{prop:composite_Lyap_function}
Consider 
any continuously differentiable function $(\delta,\gamma) \mapsto V_{\delta\gamma}$ where $(\delta,\gamma)$ belong to any state space $\hat{\mathcal{T}}\in\{\mathcal{T},\mathcal{T}_1,\mathcal{T}_2,\mathcal{T}_3\}$
and  such that 
$ \alpha_1(|(\delta, \gamma)|_{\hat{\mathcal{T}}}) \le V_{\delta\gamma}(\delta,\gamma) \le \alpha_2(|(\delta, \gamma)|_{\hat{\mathcal{T}}})$ where $\alpha_1, \alpha_2$ are class $\mathcal{K}_{\infty}$ functions. 
Let $(\delta,\gamma) \mapsto\tilde\omega$ be a continuous function such that 
\begin{equation}
\dot V_{\delta\gamma} := \dfrac{\partial V_{\delta\gamma}}{\partial\delta}
\frac{k_1\sin(2\gamma)}{2}  -\dfrac{\partial V_{\delta\gamma}}{\partial\gamma}\tilde\omega(\delta,\gamma) \leq -\alpha_{\delta\gamma}(V_{\delta\gamma}) \label{eq:V_dot_delta_gamma}
\end{equation}
for some class $\mathcal{K}$ function $\alpha_{\delta\gamma}$. 
Then for {\em any} function $(r,s) \mapsto \mathcal{V}$ such that 
\begin{enumerate}
\item $\mathcal{V}(0,0)=0$ and $\mathcal{V}(r, s)>0$ if $r > 0 $ or $s > 0$, \label{it:pos_def_composite_Lyap_function} 
\item $\lim_{r+s\rightarrow\infty} \mathcal{V}(r,s) = \infty$,
\item $\dfrac{\partial \mathcal{V} }{\partial r }(r,s)>0$ and $\dfrac{\partial \mathcal{V} }{\partial s}(r,s)>0$  if $r>0$ or $s > 0$, \label{it:positive_partials}
\end{enumerate}
including the particular functions
 $\mathcal{V}(r,s)= r+s$, $\mathcal{V}(r,s)= \ln(1+r) + s$, $\mathcal{V}(r,s)= (1+r){\rm e}^{s} -1$,  
for both of the `composite' Lyapunov functions $V(\rho,\delta,\gamma) = \mathcal{V}\left(\rho^2,V_{\delta\gamma}(\delta,\gamma)\right)$ and $V(\rho,\delta,\gamma) = \mathcal{V}\left(V_{\delta\gamma}(\delta,\gamma), \rho^2 \right)$ there exist the respective (distinct) 
triplets of functions $\bar\alpha_1,\bar\alpha_2\in\mathcal{K}_\infty$, $\alpha\in\mathcal{K}$
such that,  for all $(\rho,\delta,\gamma)$ in the state space $\hat{\mathcal{S}}\in\{\mathcal{S},\mathcal{S}_1,\mathcal{S}_2,\mathcal{S}_3\}$, it holds that  
$ \bar\alpha_1(|(\rho,\delta, \gamma)|_{\hat{\mathcal{S}}}) \le V(\rho,\delta,\gamma) \le \bar\alpha_2(|(\rho,\delta, \gamma)|_{\hat{\mathcal{S}}})$ and 
\begin{align}
\dot{V}(\rho,\delta,\gamma) := 
-\dfrac{\partial V }{\partial \rho}
k_1\rho^2&\cos^2\gamma  + 
\dfrac{\partial V}{\partial \delta}
\frac{k_1 \sin(2\gamma)}{2} 
 \nonumber\\
&-\dfrac{\partial V}{\partial\gamma}\tilde\omega(\delta,\gamma) \leq -\alpha(V)\,. \label{eq:V_dot_rho_delta_gamma}
\end{align}

\end{proposition}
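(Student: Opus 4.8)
The plan is to prove the two assertions — the $\mathcal{K}_{\infty}$ bounds on $V$ and the strict dissipation inequality — separately, using that the $(\delta,\gamma)$-subsystem \eqref{eq:unicycle_polar_closed_loop-Gv-3-pass} is decoupled from $\rho$ and that conditions \ref{it:pos_def_composite_Lyap_function}--\ref{it:positive_partials} make $\mathcal{V}$ nondecreasing in each argument.

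\textbf{Comparison bounds.} From monotonicity and $\alpha_1(|(\delta,\gamma)|_{\hat{\mathcal{T}}})\le V_{\delta\gamma}\le\alpha_2(|(\delta,\gamma)|_{\hat{\mathcal{T}}})$ I would first sandwich, for $V=\mathcal{V}(\rho^2,V_{\delta\gamma})$,
\[ \mathcal{V}\!\left(\rho^2,\alpha_1(|(\delta,\gamma)|_{\hat{\mathcal{T}}})\right)\ \le\ V\ \le\ \mathcal{V}\!\left(\rho^2,\alpha_2(|(\delta,\gamma)|_{\hat{\mathcal{T}}})\right), \]
and symmetrically, with the two slots exchanged, for $V=\mathcal{V}(V_{\delta\gamma},\rho^2)$. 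Each outer map $(\rho,m)\mapsto\mathcal{V}(\rho^2,\alpha_i(m))$ is continuous, vanishes only at $(\rho,m)=(0,0)$ by \ref{it:pos_def_composite_Lyap_function}, and is radially unbounded by the second condition (since $\rho^2+\alpha_i(m)\to\infty$ whenever $\rho+m\to\infty$). Invoking the standard fact that such a function is bracketed above and below by $\mathcal{K}_{\infty}$ functions of $\rho+m$, and recalling $|(\rho,\delta,\gamma)|_{\hat{\mathcal{S}}}=\rho+|(\delta,\gamma)|_{\hat{\mathcal{T}}}$, one obtains $\bar\alpha_1(|(\rho,\delta,\gamma)|_{\hat{\mathcal{S}}})\le V\le\bar\alpha_2(|(\rho,\delta,\gamma)|_{\hat{\mathcal{S}}})$ with $\bar\alpha_1,\bar\alpha_2\in\mathcal{K}_{\infty}$; the two composites yield distinct triples.

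\textbf{Strict dissipation.} Because \eqref{eq:unicycle_polar_closed_loop-Gv-3-pass} is $\rho$-independent, the derivative of $V_{\delta\gamma}$ along the full closed loop coincides with \eqref{eq:V_dot_delta_gamma}, hence $\dot V_{\delta\gamma}\le-\alpha_{\delta\gamma}(V_{\delta\gamma})$ on trajectories. By the chain rule and \eqref{eq:unicycle_polar_closed_loop-Gv-2},
\[ \dot V=\frac{\partial\mathcal{V}}{\partial r}\,\ddt{\rho^{2}}+\frac{\partial\mathcal{V}}{\partial s}\,\dot V_{\delta\gamma}\ \le\ -2k_1\rho^2\cos^2\gamma\,\frac{\partial\mathcal{V}}{\partial r}-\alpha_{\delta\gamma}(V_{\delta\gamma})\,\frac{\partial\mathcal{V}}{\partial s}, \]
which by condition \ref{it:positive_partials} is already $\le 0$. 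To upgrade to strictness I would split on a fixed $\gamma_0\in(0,\pi/2)$, say $\gamma_0=\pi/4$. On $\{|\gamma|\le\gamma_0\}$ one has $\cos^2\gamma\ge\cos^2\gamma_0>0$, so $\dot V\le-G(\rho^2,V_{\delta\gamma})$ with $G(r,s):=2k_1\cos^2(\gamma_0)\,r\,\partial_r\mathcal{V}(r,s)+\alpha_{\delta\gamma}(s)\,\partial_s\mathcal{V}(r,s)$, which is continuous and positive definite (for $r>0$ the first term is positive; for $r=0$, $s>0$ the second is, since $\alpha_{\delta\gamma}\in\mathcal{K}$ and $\partial_s\mathcal{V}(0,s)>0$); bracketing $G$ by a class-$\mathcal{K}$ function of $\mathcal{V}$ then gives $\dot V\le-\alpha_{\mathrm{I}}(V)$. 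On $\{|\gamma|>\gamma_0\}$ the metric of $\hat{\mathcal{T}}$ forces $|(\delta,\gamma)|_{\hat{\mathcal{T}}}\ge\mu_0>0$, whence $V_{\delta\gamma}\ge\alpha_1(\mu_0)>0$ and $\alpha_{\delta\gamma}(V_{\delta\gamma})\ge c_1>0$; discarding the nonpositive $\rho$-term leaves $\dot V\le-c_1\,\partial_s\mathcal{V}(\rho^2,V_{\delta\gamma})\le-\alpha_{\mathrm{II}}(V)$ (for the three displayed $\mathcal{V}$, $\partial_s\mathcal{V}\ge 1$, so $\alpha_{\mathrm{II}}$ can be taken bounded by $c_1$, reflecting that no extra $\rho$-decay is available there). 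Finally, taking $\alpha$ to be a class-$\mathcal{K}$ minorant of $\min(\alpha_{\mathrm{I}},\alpha_{\mathrm{II}})$ gives $\dot V\le-\alpha(V)$ on all of $\hat{\mathcal{S}}$; the swapped composite is handled identically with $\rho^2$ and $V_{\delta\gamma}$ interchanged.

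\textbf{Main obstacle.} The crux is the second case: near $\gamma=\pm\pi/2$ the $\rho$-channel supplies no dissipation at all ($\cos^2\gamma=0$), so strictness must be carried entirely by $V_{\delta\gamma}$. What rescues the argument is the structural trade-off that wherever $\cos^2\gamma$ is small, $|\gamma|$ — hence $V_{\delta\gamma}$, hence its decay rate $\alpha_{\delta\gamma}(V_{\delta\gamma})$ — is bounded away from zero; this is precisely what turns the composite from a non-strict into a strict CLF. A secondary technical point is extracting the final $\mathcal{K}$ comparison function: one needs the $\mathcal{V}$-weighted dissipation (the terms $r\,\partial_r\mathcal{V}$ and $\alpha_{\delta\gamma}(V_{\delta\gamma})\,\partial_s\mathcal{V}$, and their swapped counterparts) not to vanish along escaping trajectories, which holds for the three displayed $\mathcal{V}$ and is exactly where conditions \ref{it:pos_def_composite_Lyap_function}--\ref{it:positive_partials} and the radial unboundedness of $\mathcal{V}$ are invoked, to keep $G$ positive definite and its sublevel sets compact.
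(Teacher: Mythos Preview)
The paper does not actually prove this proposition: immediately before Section~II it states that ``due to space restrictions, some proofs have been omitted and will be included in the extended journal version,'' and Proposition~\ref{prop:composite_Lyap_function} is one of those omitted proofs (it is only \emph{invoked} in the proof of Theorem~\ref{thm:backstepping_theorem}). So there is no reference argument to compare your attempt against; your two-step plan---monotone sandwiching for the $\mathcal{K}_\infty$ bounds, then a case split on $|\gamma|\lessgtr\gamma_0$ to trade $\cos^2\gamma$-dissipation against $V_{\delta\gamma}$-dissipation---is the natural route and is what one would expect the deferred proof to do.

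That said, there is a real gap in your Case~2 when you move beyond the three displayed $\mathcal{V}$'s to the ``any $\mathcal{V}$'' claim. You bound $\dot V\le -c_1\,\partial_s\mathcal{V}(\rho^2,V_{\delta\gamma})$ and then note $\partial_s\mathcal{V}\ge 1$ for the three examples. For a general $\mathcal{V}$ satisfying only conditions~\ref{it:pos_def_composite_Lyap_function}--\ref{it:positive_partials}, $\partial_s\mathcal{V}$ can decay to zero along escaping sequences: take $\mathcal{V}(r,s)=r+\ln(1+s)$, which meets all three hypotheses but has $\partial_s\mathcal{V}=1/(1+s)$. If in addition $\alpha_{\delta\gamma}\in\mathcal{K}$ is bounded (the proposition allows this), then along $\rho\to 0$, $\gamma=\pi/2$, $V_{\delta\gamma}\to\infty$ the total dissipation behaves like $\alpha_{\delta\gamma}(V_{\delta\gamma})/(1+V_{\delta\gamma})\to 0$ while $V\to\infty$, and no \emph{increasing} $\alpha$ can satisfy $\dot V\le-\alpha(V)$. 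So conditions~\ref{it:pos_def_composite_Lyap_function}--\ref{it:positive_partials} alone do not ``keep $G$ positive definite'' in the sense you need; some extra structure (e.g.\ $\partial_s\mathcal{V}$ bounded away from zero, or $\alpha_{\delta\gamma}\in\mathcal{K}_\infty$, which does hold for the concrete $V_{\delta\gamma}$'s in Theorems~\ref{thm:backstepping_theorem}--\ref{thm:unicycle_CLF_Bagal}) is implicitly being used. Your argument is complete for the three listed $\mathcal{V}$'s and, more generally, whenever $\partial_s\mathcal{V}$ is uniformly positive; the issue is only with the fully general claim, and it may well be a gap in the proposition itself rather than in your proof.
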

The composite Lypunov functions in Proposition \ref{prop:composite_Lyap_function} are actually CLFs for \eqref{eq:unicycle_polar_closed_loop-Gv-1}, in accordance with the following definition. 
\begin{definition}[CLF for the unicycle \eqref{eq:unicycle_polar_closed_loop-Gv-1}]
\label{def-CLF}
A continuously differentiable function $(\rho,\delta,\gamma)\mapsto V$ 
is a \textit{control Lyapunov function} (CLF) for \eqref{eq:unicycle_polar_closed_loop-Gv-1} if it has the following properties.
\begin{enumerate}
\item There exist class $\mathcal{K}$ functions  $(\bar\alpha_1,\bar\alpha_2)$ such that,  for all $(\rho,\delta,\gamma)$ in 
$\Sigma = \{\rho\geq 0\}\times \hat{\mathcal{T}}$, where $\hat{\mathcal{T}}\in\{\mathcal{T},\mathcal{T}_1\}$, 
$ \bar\alpha_1(|(\rho,\delta, \gamma)|_{\hat{\mathcal{S}}}) \le V(\rho,\delta,\gamma) \le \bar\alpha_2(|(\rho,\delta, \gamma)|_{\hat{\mathcal{S}}})$, where $\hat{\mathcal{S}} = \{\rho> 0\}\times \hat{\mathcal{T}}$. \label{it:clf_prop_1}
\item There exists $(v/\rho,\omega)\in \mathbb{R}^2$ such that
\begin{equation*}
\left[-\dfrac{\partial V}{\partial\rho}\rho\cos\gamma + \left(\dfrac{\partial V}{\partial\delta}+\dfrac{\partial V}{\partial\gamma}\right)\sin\gamma \right]\dfrac{v}{\rho} - \dfrac{\partial V}{\partial\gamma}\omega < 0 \,,
\end{equation*}
for all $(\rho,\delta,\gamma)\neq (0,0,0)$ in $\Sigma$. \label{it:clf_prop_2}
\end{enumerate}
\end{definition}

\section{Families of CLFs for the Unicycle}

In this section we summarize the principal control designs for the $(\delta,\gamma)$-subsystem \eqref{eq:unicycle_polar_closed_loop-Gv-3-pass}: a pair of backstepping designs for the state spaces $\mathcal{T}$ and $\mathcal{T}_1$ and two passivity-based designs for the state spaces $\mathcal{T}_2$ and $\mathcal{T}_3$. Each steering controller admits a function $V_{\delta\gamma}$ which can be viewed as a global strict CLFs for \eqref{eq:unicycle_polar_closed_loop-Gv-3-pass} on its respective state space. Building on Proposition~\ref{prop:composite_Lyap_function} and the designed $V_{\delta\gamma}$-CLFs for subsystem \eqref{eq:unicycle_polar_closed_loop-Gv-3-pass}, we obtain multiple families of CLFs for \eqref{eq:unicycle_polar_closed_loop-Gv-1}.

\subsection{Backstepping-based design}
Let us regard the system \eqref{eq:unicycle_polar_closed_loop-Gv-3-pass} as a double integrator chain, with the sinusoidal nonlinearity acting between the two integrators and limiting the magnitude of the input that acts on $\dot\delta$. Such a nonlinear integrator chain provides an opportunity for an unconventional application of the backstepping method. The next two controllers are designed with such an approach. 
Let us start with the identity
\begin{equation}
\sin(\arctan(x)-y) = \frac{x\cos y - \sin y}{\sqrt{1+x^2}},
\end{equation}
Considering \eqref{eq:unicycle_polar_closed_loop-Gv-3n-pass}, 
we note that the would-be feedback for the LoS-angle (the `stabilizing function') 
\begin{equation}
\gamma(\delta) = -\frac{1}{2} \arctan(2k_2 \Delta(\delta)) 
\end{equation}
where the function $\Delta(\delta)$  is defined later for each of the state-space \eqref{eq:ss_S} and \eqref{eq:ss_S1}, respectively. With $k_2 > 0$, substituting 
$\gamma(\delta)$ in \eqref{eq:unicycle_polar_closed_loop-Gv-3n-pass} results in the globally asymptotically and locally exponentially stable $\delta$-dynamics
\begin{equation}
\dot{\delta} = k_1\frac{\sin(2\gamma(\delta))}{2} 
= - \frac{k_1 k_2 \Delta}{\sqrt{1+4 k_2 ^2\Delta^2}}
\end{equation} 
if $\Delta(\delta)$ has the same sign as $\delta$.
Now, we introduce a backstepping change of the $\gamma$-variable
\begin{align}
\label{eq:backstepping_z}
\fbox{$\displaystyle z = \gamma + \frac{1}{2}\arctan(2 k_2 \Delta)$}
\end{align}
The backstepping transformation as well as the controllers employ the following continuous, bounded function:
\begin{align}
&\psi(z,\gamma) = \frac{\sin(2z-2\gamma) +\sin(2\gamma)}{2z}
\nonumber\\
=& \frac{1}{\sqrt{1 + 4k_2^2 \Delta^2}} \left(\,\frac{\sin(2z)}{2z}
+ 2k_2 \Delta \frac{1 - \cos(2z)}{2z}\right)
 \label{eq:psi_definition}
\end{align}
that has the property that $\psi(0,\gamma)  = \cos(2\gamma)$ and, in particular, that $\psi(0,0) = \max\psi(z,\gamma)= \psi(0,n\pi) = 1$ for all integer $n$. If, instead of the ``unconventional, bounded integrator chain'' \eqref{eq:unicycle_polar_closed_loop-Gv-3-pass} one had $\dot\delta = k_1 \gamma$, the function $\psi$ would be simply $\psi(z,\gamma)\equiv 1$, namely, the virtual input coefficient.

\paragraph{GloBa ($\mathcal{T}$) and BAR-FLi ($\mathcal{T}_1$) controllers}
\begin{theorem}
\label{thm:backstepping_theorem}
Consider the system \eqref{eq:unicycle_polar_closed_loop-Gv-1} in closed-loop with \eqref{eq-basic-v-control}, \eqref{eq-omega-general}, and 
\begin{equation}
\tilde\omega = k_4 z + \dfrac{\partial \Delta}{\partial \delta} \left( \frac{k_1k_2 \sin(2\gamma)}{2(1+4k_2^2 \Delta^2)} + k_3 \psi(z,\gamma)\Delta \right),
\label{eq:backstepping_controllers}
\end{equation}
where $k_1,k_2,k_3,k_4 > 0$, and $z$ and $\psi(z,\gamma)$ are defined in \eqref{eq:backstepping_z} and \eqref{eq:psi_definition}, respectively. Let $\Delta$ be given by either
\begin{align}
\Delta &= \delta, \label{eq:Delta_globa} \\
\Delta &= 2 \tan \frac{\delta}{2}. \label{eq:Delta_Barfli}
\end{align}
Then, the origin $\rho = \delta = \gamma = 0$ is globally asymptotically stable on $\mathcal{S}$ in the case of \eqref{eq:Delta_globa}, and on $\mathcal{S}_1$ in the case of \eqref{eq:Delta_Barfli}, in the sense of Definition~\ref{def-our-GAS}.
Furthermore, all the composite Lyapunov functions $V(\rho, \delta, \gamma) = \mathcal{V}(\rho^2, V_{\delta \gamma})$ and $V(\rho, \delta, \gamma) = \mathcal{V}(V_{\delta \gamma},\rho^2)$, for all functions $\mathcal{V}$ satisfying the conditions in Proposition \ref{prop:composite_Lyap_function}, and with $V_{\delta \gamma}$ for both cases of $\Delta$ defined as
\begin{equation}
\label{eq:V_CLF_backstepping}
V_{\delta \gamma}(\delta, \gamma) = 
\Delta^2 + q^2z^2\,, \quad q = \sqrt{\frac{k_1}{k_3}}\,,
\end{equation}
are (globally) strict CLFs for \eqref{eq:unicycle_polar_closed_loop-Gv-1} with respect to the input pair $(v/\rho,\omega)$ in the sense of Definition \ref{def-CLF}. 
\end{theorem}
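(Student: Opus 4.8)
The plan is to verify the two conditions of Definition~\ref{def-CLF} by first establishing the strict decrease inequality \eqref{eq:V_dot_delta_gamma} for the $(\delta,\gamma)$-subsystem \eqref{eq:unicycle_polar_closed_loop-Gv-3-pass} with the candidate $V_{\delta\gamma} = \Delta^2 + q^2 z^2$, and then invoking Proposition~\ref{prop:composite_Lyap_function} to lift this to the full composite CLFs. The backstepping structure dictates the computation: since $z = \gamma + \tfrac12\arctan(2k_2\Delta)$, we have $\dot z = \dot\gamma + \tfrac{k_2}{1+4k_2^2\Delta^2}\tfrac{\partial\Delta}{\partial\delta}\dot\delta$, and substituting $\dot\delta = \tfrac{k_1}{2}\sin(2\gamma)$, $\dot\gamma = -\tilde\omega$ with $\tilde\omega$ from \eqref{eq:backstepping_controllers} should produce, after the cancellations built into the controller, something of the form $\dot z = -k_4 z - k_3 \psi(z,\gamma)\tfrac{\partial\Delta}{\partial\delta}\Delta$. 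Meanwhile $\tfrac{d}{dt}\Delta^2 = 2\Delta\tfrac{\partial\Delta}{\partial\delta}\dot\delta = k_1\Delta\tfrac{\partial\Delta}{\partial\delta}\sin(2\gamma)$.

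First I would rewrite $\tfrac{d}{dt}(\Delta^2)$ using the definition of $\psi$. The key algebraic identity is that $\tfrac{k_1}{2}\Delta\tfrac{\partial\Delta}{\partial\delta}\sin(2\gamma)$ can be re-expressed via $\sin(2\gamma) = \sin(2z - \arctan(2k_2\Delta))$ and the $\arctan$-subtraction identity $\sin(\arctan(x)-y) = \tfrac{x\cos y - \sin y}{\sqrt{1+x^2}}$ stated in the excerpt, so that the ``virtual input'' contribution splits into a term proportional to $-\tfrac{k_1k_2\Delta^2}{\sqrt{1+4k_2^2\Delta^2}}\tfrac{\partial\Delta}{\partial\delta}$ (the nominal $\delta$-stabilizing term) plus a cross term that, through the definition \eqref{eq:psi_definition} of $\psi$, equals $k_1 z\psi(z,\gamma)\Delta\tfrac{\partial\Delta}{\partial\delta}$ up to the chosen constants. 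Then $\dot V_{\delta\gamma} = \tfrac{d}{dt}\Delta^2 + q^2\tfrac{d}{dt}z^2$: the cross term $k_1 z\psi\Delta\tfrac{\partial\Delta}{\partial\delta}$ from $\tfrac{d}{dt}\Delta^2$ is cancelled exactly by the $2q^2 z \cdot(-k_3\psi\Delta\tfrac{\partial\Delta}{\partial\delta})$ term coming from $q^2\tfrac{d}{dt}z^2$ precisely because $q^2 = k_1/k_3$, leaving
\begin{equation*}
\dot V_{\delta\gamma} = -\frac{2k_1 k_2 \Delta^2}{\sqrt{1+4k_2^2\Delta^2}}\,\frac{\partial\Delta}{\partial\delta} - 2q^2 k_4 z^2 \,.
\end{equation*}
For $\Delta = \delta$ we have $\tfrac{\partial\Delta}{\partial\delta} = 1$, and for $\Delta = 2\tan(\delta/2)$ we have $\tfrac{\partial\Delta}{\partial\delta} = 1 + \tan^2(\delta/2) = 1 + \Delta^2/4 \ge 1 > 0$; in both cases $\tfrac{\partial\Delta}{\partial\delta} > 0$ and $\Delta$ has the sign of $\delta$, so $\dot V_{\delta\gamma}$ is negative definite in $(\Delta, z)$, hence in $(\delta,\gamma)_{\hat{\mathcal T}}$. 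I would then bound it below by $-\alpha_{\delta\gamma}(V_{\delta\gamma})$ for a class $\mathcal K$ function: on compact sets this is immediate, and globally one uses that for large $\Delta^2$ the first term behaves like $\sim |\Delta|\cdot(\partial\Delta/\partial\delta)$ which dominates or is comparable, so a single class-$\mathcal K$ (not necessarily $\mathcal K_\infty$) lower bound $\alpha_{\delta\gamma}$ exists. I also need the sandwich $\alpha_1(|(\delta,\gamma)|_{\hat{\mathcal T}}) \le V_{\delta\gamma} \le \alpha_2(|(\delta,\gamma)|_{\hat{\mathcal T}})$: since $|(\delta,\gamma)|_{\mathcal T} = |\delta| + |\gamma|$ (resp.\ $|(\delta,\gamma)|_{\mathcal T_1} = 2\tan(|\delta|/2) + |\gamma|$) and $z$ differs from $\gamma$ by the bounded quantity $\tfrac12\arctan(2k_2\Delta) \in (-\pi/4,\pi/4)$, one shows $\Delta^2 + q^2 z^2$ is bounded above and below by class-$\mathcal K_\infty$ functions of $|\Delta| + |z|$, and then relates $|\Delta|+|z|$ to $|\Delta| + |\gamma|$ using the triangle inequality and the fact that $|\Delta|$ controls the $\arctan$ correction.

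With \eqref{eq:V_dot_delta_gamma} in hand, Proposition~\ref{prop:composite_Lyap_function} applies verbatim: it yields, for every admissible $\mathcal V$, the sandwich and the strict-decrease inequality \eqref{eq:V_dot_rho_delta_gamma} for $V = \mathcal V(\rho^2, V_{\delta\gamma})$ and $V = \mathcal V(V_{\delta\gamma}, \rho^2)$ on the appropriate $\hat{\mathcal S}$. This simultaneously gives the $\mathcal{KL}$-bound of Definition~\ref{def-our-GAS} (a negative-definite $\dot V$ with a proper, positive-definite $V$ yields GAS on $\hat{\mathcal S}$ by the standard comparison-lemma argument, which I would cite rather than reprove) and verifies property~\ref{it:clf_prop_1} of Definition~\ref{def-CLF}. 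For property~\ref{it:clf_prop_2}, note that the closed-loop $\dot V$ in \eqref{eq:V_dot_rho_delta_gamma} is exactly the expression in Definition~\ref{def-CLF}~\ref{it:clf_prop_2} evaluated at the particular choice $v/\rho = k_1\cos\gamma$, $\omega = \tfrac{k_1}{2}\sin(2\gamma) + \tilde\omega$; since \eqref{eq:V_dot_rho_delta_gamma} shows this equals $-\alpha(V) < 0$ for all $(\rho,\delta,\gamma) \ne (0,0,0)$ in $\Sigma$ (using that $\alpha \in \mathcal K$ and $V > 0$ off the origin), the required $(v/\rho, \omega)$ exists, so $V$ is a CLF.

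The main obstacle I anticipate is the cancellation bookkeeping in the first step: verifying that the specific controller \eqref{eq:backstepping_controllers} — in particular the $\tfrac{\partial\Delta}{\partial\delta}\tfrac{k_1 k_2 \sin(2\gamma)}{2(1+4k_2^2\Delta^2)}$ feedforward term and the $k_3\psi\Delta$ term — produces exactly $\dot z = -k_4 z - k_3\psi\Delta\,\partial\Delta/\partial\delta$ with no residual, and that the $\psi$-definition \eqref{eq:psi_definition} is precisely what makes the cross terms in $\dot{(\Delta^2)}$ and $q^2\dot{(z^2)}$ cancel. This requires patiently expanding $\sin(2\gamma) = \sin(2z - 2\cdot\tfrac12\arctan(2k_2\Delta))$, using $\tfrac12\arctan(2k_2\Delta)$'s sine and cosine ($= k_2\Delta/\sqrt{1+4k_2^2\Delta^2}$ and $1/\sqrt{1+4k_2^2\Delta^2}$ after a half-angle step, or more directly working with $\sin$ and $\cos$ of the full $\arctan(2k_2\Delta)$ if the stabilizing function is written with the factor of $\tfrac12$ absorbed), and matching against \eqref{eq:psi_definition}. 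A secondary subtlety is the global (non-$\mathcal K_\infty$) nature of $\alpha_{\delta\gamma}$ and $\alpha$ in the $\Delta = \delta$ case — the decay rate of $\dot V_{\delta\gamma}$ in $\delta$ is only linear in $|\delta|$ for large $|\delta|$ while $V_{\delta\gamma}$ grows like $\delta^2$, so $\alpha_{\delta\gamma}(s) \sim \sqrt{s}$ at infinity, which is why Definition~\ref{def-our-GAS} and Proposition~\ref{prop:composite_Lyap_function} only demand class $\mathcal K$ and not $\mathcal K_\infty$ for the decrease rate; I would make this explicit to avoid any false claim of exponential global decay.
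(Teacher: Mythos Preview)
Your proposal is correct and follows essentially the same route as the paper's proof: compute the closed-loop $(\dot\delta,\dot z)$-dynamics, obtain the identical expression
$\dot V_{\delta\gamma} = -\dfrac{2k_1k_2\Delta^2}{\sqrt{1+4k_2^2\Delta^2}}\dfrac{\partial\Delta}{\partial\delta} - 2q^2 k_4 z^2$,
verify negative definiteness via $\partial\Delta/\partial\delta>0$, and then invoke Proposition~\ref{prop:composite_Lyap_function} together with standard comparison-lemma results (the paper cites Khalil, Lemmas~4.3 and~4.4) to conclude both the CLF property and GAS on $\hat{\mathcal S}$. Your additional commentary on the cancellation mechanism (why $q^2=k_1/k_3$ is chosen), the sandwich bounds, and the class-$\mathcal K$ versus $\mathcal K_\infty$ distinction for $\alpha_{\delta\gamma}$ is more explicit than the paper's terse treatment, but does not diverge from it.
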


\begin{proof}
Taking into account \eqref{eq:unicycle_polar_closed_loop-Gv-3-pass} with \eqref{eq:backstepping_z}, \eqref{eq:psi_definition} and \eqref{eq:backstepping_controllers},  
we obtain
\begin{subequations}
\begin{align}
\dot{\delta} &= - k_1 \left( \frac{k_2 \Delta}{\sqrt{1+4k_2^2\Delta^2}}
- z \psi(z,\gamma) \right) \,, \label{eq:delta_globa} \\
\dot{z} &=  -k_4 z -k_3 \frac{\partial \Delta}{\partial \delta} \psi(z,\gamma) \Delta \,.  \label{eq:z_globa}
\end{align}
\label{eq:backstepping_closed_loop}
\end{subequations}
Since the Lyapunov function \eqref{eq:V_CLF_backstepping} with either \eqref{eq:Delta_globa} or \eqref{eq:Delta_Barfli}, is positive definite on $\mathcal{T}$ and $\mathcal{T}_1$ respectively, there exists class $\mathcal{K}_{\infty}$-functions $\alpha_1$, $\alpha_2$ such that $\alpha_1(\abs{(\delta,\gamma)}_{\hat{\mathcal{T}}}) \le V_{\delta \gamma }(\delta, \gamma) \le \alpha_2(\abs{(\delta,\gamma)}_{\hat{\mathcal{T}}})$ where $\hat{\mathcal{T}} = \{\mathcal{T}, \mathcal{T}_1\}$ is dependent on the choice of $\Delta$.
The time derivative \eqref{eq:V_CLF_backstepping} along the solutions of \eqref{eq:backstepping_closed_loop} is  
\begin{equation}
    \dot{V}_{\delta \gamma} = -\frac{2k_1k_2 \Delta^2}{\sqrt{1+4k_2^2 \Delta^2}} \frac{\partial \Delta}{\partial \delta} - 2k_4 q^2 z^2, \label{eq:V_dot_delta_gamma_backstepping}
\end{equation}
which considering \eqref{eq:Delta_globa} and $\partial \Delta / \partial \delta = 1$, \eqref{eq:V_dot_delta_gamma_backstepping} is negative definite on $\mathcal{T}$ and considering \eqref{eq:Delta_Barfli} and $\partial \Delta / \partial \delta = 1 + \tan^2 \frac{\delta}{2} >0$  for all $|\delta| < \pi $, \eqref{eq:V_dot_delta_gamma_backstepping} is negative definite on $\mathcal{T}_1$.
Based on \cite[Lemma 4.3]{khalil_nonlinear_2002},  
there exists 
$\alpha_3\in\mathcal{K}$, such that, $\dot{V}_{\delta \gamma} \le - \alpha_3 \circ \alpha_2^{-1}
(V_{\delta \gamma})$, where $
\alpha_3 \circ \alpha_2^{-1}\in\mathcal{K}$. 
The Lyapunov function $V_{\delta \gamma}$ satisfies the assumptions of Proposition \ref{prop:composite_Lyap_function}, which implies that the composite Lyapunov functions $V(\rho, \delta, \gamma) = \mathcal{V}(\rho^2, V_{\delta \gamma})$ and $V(\rho, \delta, \gamma) = \mathcal{V}(V_{\delta \gamma},\rho^2)$ with  $\dot{V}(\rho,\delta,\gamma) \le - \alpha(V)$ are strict CLFs for \eqref{eq:unicycle_polar_closed_loop-Gv-1}. 
Furthermore, $\dot{V}(\rho,\delta,\gamma) \le - \alpha(V)$ implies the existence of  $\beta\in\mathcal{KL}$  such that
$V(t) \le \beta(V_0, t)$  
holds for all $t\geq 0$ according to \cite[Lemma 4.4]{khalil_nonlinear_2002}. Considering the fact from Proposition \ref{prop:composite_Lyap_function} that the composite Lyapunov functions are bounded as $ \bar\alpha_1(|(\rho,\delta, \gamma)|_{\hat{\mathcal{S}}}) \le V(\rho,\delta,\gamma) \le \bar\alpha_2(|(\rho,\delta, \gamma)|_{\hat{\mathcal{S}}})$, from  $V(t) \le \beta(V_0, t)$, we have that     
$\abs{(\rho,\delta,\gamma)}_{\hat{\mathcal{S}}} \le \bar{\alpha}_1^{-1}(\beta( \bar{\alpha}_2(\abs{(\rho_0, \delta_0,\gamma_0)}_{\hat{\mathcal{S}}}),t))$
where $\bar{\alpha}_1^{-1}(\beta( \bar{\alpha}_2(r),t))$ is class $\mathcal{KL}$ in $(r,t)$.
Thus, 
$\rho =  \delta = \gamma = 0$ is GAS on $\hat{\mathcal{S}}$ where $\hat{\mathcal{S}} = \{\mathcal{S}, \mathcal{S}_1\}$ is dependent on the choice of $\Delta$.
\end{proof}

The globally asymptotically stabilizing backstepping controller \eqref{eq:backstepping_controllers} with \eqref{eq:Delta_globa} is referred to as GloBa (pronounced `globa') 
employs the backstepping transformation \eqref{eq:backstepping_z} and the associated strict CLF \eqref{eq:V_CLF_backstepping} defined with \eqref{eq:Delta_globa} and its region of attraction is the unconstrained state-space  $\mathcal{T}$ given in \eqref{eq:ss_S}.
 We also present the backstepping controller \eqref{eq:backstepping_controllers}
which employs the backstepping transformation \eqref{eq:backstepping_z} and the associated CLF \eqref{eq:V_CLF_backstepping} defined with \eqref{eq:Delta_Barfli}.
The region of attraction of this controller is the constrained state-space $\mathcal{S}_1$ given in \eqref{eq:ss_S1}, which is combined with \eqref{eq-basic-v-control} and \eqref{eq-omega-general}
entails all positions except the nonnegative half of the $x$-axis. In other words, the algorithm achieves stable parking while never crossing directly in front of the parking target. Since this controller maintains $|\delta(t)|<\pi$ for the polar angle $\delta$, implying the unicycle never crosses the line in front of the target, we refer to the controller as the Backstepping to Avoid Running across Front Line (BAR-FLi, pronounced `Bar Fly') controller.
It is important to note that the CLF 
$V = \rho^2 + V_{\delta\gamma}$ where $V_{\delta \gamma}$ is as in \eqref{eq:V_CLF_backstepping} with \eqref{eq:Delta_globa} being the same CLF that Restrepo et al.~\cite{restrepo2020leader} have presented, though the controller differs substantially from \eqref{eq-basic-v-control} and \eqref{eq:backstepping_controllers} with \eqref{eq:Delta_globa}.

\subsection{Passivity-inspired design}

Let us regard the system \eqref{eq:unicycle_polar_closed_loop-Gv-3-pass} as a connection of the integrator subsystem \eqref{eq:unicycle_polar_closed_loop-Gv-3n-pass}, which is passive (positive real) and driven by the output $y_2={k_1} \sin(2\gamma)$ of the subsystem \eqref{eq:unicycle_polar_closed_loop-Gv-3a-pass}. Based on the theorem on the stability of a negative feedback interconnection of stable systems, the design objective is that of choosing the control $\tilde\omega$ to make the subsystem \eqref{eq:unicycle_polar_closed_loop-Gv-3a-pass} strictly passive, with $-y_1$, where $y_1 = k_3\delta$ and $k_3>0$ is the negated output of the subsystem \eqref{eq:unicycle_polar_closed_loop-Gv-3n-pass} serving as the input to \eqref{eq:unicycle_polar_closed_loop-Gv-3a-pass}, and to make the overall feedback loop zero-state observable with respect to $\gamma$ as the output of the overall feedback loop. The next two designs pursue this approach on the state spaces $\mathcal{T}_2$ and $\mathcal{T}_3$ defined in \eqref{eq:ss_S2} and \eqref{eq:ss_S3}, respectively. Compared to the backstepping approach in the previous section, both designs impose bounds on the LoS angle $\gamma$, and one of them additionally constrains the polar angle $\delta$—a feature not achievable with backstepping. To derive our results, we first present the following lemma.

\begin{lemma}\label{lemma:sinterm_upperbound}
The inequality $1- k \cos\gamma(1+\cos\gamma) \leq 2 (1 + k)  \tan^2\frac{\gamma}{2}$ holds for all $k\geq 1$ and $x\in\mathbb{R}$.
\end{lemma}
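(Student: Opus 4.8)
The plan is to convert the trigonometric inequality into an elementary algebraic one via the Weierstrass half-angle substitution. Put $t = \tan\frac{\gamma}{2}$, which is legitimate for every $\gamma$ with $\gamma \not\equiv \pi \pmod{2\pi}$; at the excluded angles the left-hand side reduces to $1$ (since there $\cos\gamma(1+\cos\gamma) = (-1)\cdot 0 = 0$) while the right-hand side is $+\infty$, so the claim is trivial there. For the remaining $\gamma$ we use $\cos\gamma = \frac{1-t^2}{1+t^2}$, hence $\cos\gamma(1+\cos\gamma) = \frac{2(1-t^2)}{(1+t^2)^2}$, together with $\tan^2\frac{\gamma}{2} = t^2$, so that the asserted inequality is equivalent to
\[
1 - \frac{2k(1-t^2)}{(1+t^2)^2} \;\le\; 2(1+k)\,t^2 .
\]

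Next I would clear the positive denominator $(1+t^2)^2$, move everything to one side, and abbreviate $u = t^2 \ge 0$. A routine expansion then shows the inequality is equivalent to
\[
2(1+k)\,u^{3} + (3+4k)\,u^{2} + (2k-1) \;\ge\; 0 .
\]

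Finally, the punchline: when $k \ge 1$ every coefficient of this cubic polynomial in $u$ is nonnegative — indeed $2(1+k)>0$, $3+4k>0$, and $2k-1 \ge 1 > 0$ — so for every $u \ge 0$ the left-hand side is at least $2k-1 \ge 1 > 0$. This proves the inequality (in fact strictly, with room to spare). The only steps needing care are the bookkeeping in the polynomial expansion and the separate treatment of the degenerate angles $\gamma \equiv \pi \pmod{2\pi}$ where the substitution breaks down; neither poses a genuine obstacle, so the argument is essentially self-contained.
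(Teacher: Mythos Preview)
Your argument is correct. The half-angle substitution $t=\tan\frac{\gamma}{2}$ with $u=t^2$ reduces the claim to
\[
2(1+k)\,u^{3} + (3+4k)\,u^{2} + (2k-1) \;\ge\; 0,
\]
and your expansion checks out line by line: the linear term in $u$ indeed vanishes, and for $k\ge 1$ all remaining coefficients are strictly positive, so the polynomial is bounded below by $2k-1\ge 1$ on $u\ge 0$. Your separate handling of $\gamma\equiv\pi\pmod{2\pi}$, where $t$ is undefined, is also correct and necessary.

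As for comparison: the paper does not actually supply a proof of this lemma (it states that some proofs are omitted for space and deferred to the journal version), so there is nothing to compare against. Your approach via the Weierstrass substitution is the natural one here and would serve perfectly well as the missing proof. One minor remark: you observe that the inequality is in fact strict, which is true, but the paper only uses the non-strict form in the proof of Theorem~\ref{thm:unicycle_CLF_BoLSA}, so this extra sharpness is not needed downstream.
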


\paragraph{BoLSA controller ($\mathcal{T}_2$)} 

\begin{theorem}
\label{thm:unicycle_CLF_BoLSA}
Consider the system \eqref{eq:unicycle_polar_closed_loop-Gv-1} in closed-loop with \eqref{eq-basic-v-control}, \eqref{eq-omega-general} and 
\begin{equation}
\label{eq-control-bounded-in-gamma}
\tilde\omega =  k_2\sin\gamma + k_3 \displaystyle
\frac{\cos\gamma}{\left(1+\displaystyle\tan^2\frac{\gamma}{2}\right)^2}\delta \,, 
\end{equation}
with  $k_1, k_2, k_3 > 0$ such that 
$k_1 k_3\geq k_2^2$.
The origin $\rho = \delta = \gamma = 0$ is GAS on $\mathcal{S}_2$ in accordance with Definition \ref{def-our-GAS}.
Furthermore, all the composite Lyapunov functions $V(\rho, \delta, \gamma) = \mathcal{V}(\rho^2, V_{\delta \gamma})$ and $V(\rho, \delta, \gamma) = \mathcal{V}(V_{\delta \gamma},\rho^2)$, for all functions $\mathcal{V}$ satisfying the conditions in Proposition \ref{prop:composite_Lyap_function}, and with $V_{\delta \gamma}$ defined as
\begin{equation}
\hspace*{-0.2cm}
\label{eq:CLF_Bolsa_delta_gamma}
V_{\delta \gamma}(\delta, \gamma) =  k_3  \left(1+
\frac{2q^2+U}{2qk_2}\right)U
+   \left (\delta+2q\tan \frac{\gamma}{2} \right)^2\,,
\end{equation}
where
\begin{align}
U=\delta^2+  4q^2 \tan^2\frac{\gamma}{2} \,, \quad q = \sqrt{\frac{k_1}{k_3}}
\end{align} 
are (globally) strict CLFs for \eqref{eq:unicycle_polar_closed_loop-Gv-1} with respect to the input pair $(v/\rho,\omega)$ in the sense of Definition \ref{def-CLF}. 
\end{theorem}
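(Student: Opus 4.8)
\textit{Strategy.} The plan is to deduce the theorem from Proposition~\ref{prop:composite_Lyap_function}: it suffices to verify that the function $V_{\delta\gamma}$ in \eqref{eq:CLF_Bolsa_delta_gamma} is a strict Lyapunov function for the $(\delta,\gamma)$-subsystem \eqref{eq:unicycle_polar_closed_loop-Gv-3-pass} under the steering law \eqref{eq-control-bounded-in-gamma} on $\mathcal{T}_2$, i.e.\ that it admits $\mathcal{K}_\infty$ bounds in $|(\delta,\gamma)|_{\mathcal{T}_2}$ and satisfies \eqref{eq:V_dot_delta_gamma}. Once this is in hand, the composite‑CLF and $\mathcal{KL}$‑estimate claims follow exactly as in the proof of Theorem~\ref{thm:backstepping_theorem}: Proposition~\ref{prop:composite_Lyap_function} supplies the $\mathcal{K}_\infty$ bounds for $V=\mathcal{V}(\rho^2,V_{\delta\gamma})$ and $\mathcal{V}(V_{\delta\gamma},\rho^2)$ together with $\dot V\le-\alpha(V)$ on $\mathcal{S}_2$, which yields the CLF property in the sense of Definition~\ref{def-CLF} with the closed‑loop choice $v/\rho=k_1\cos\gamma$, $\omega=\tfrac{k_1}{2}\sin2\gamma+\tilde\omega$, and then \cite[Lemma~4.4]{khalil_nonlinear_2002} applied to $\dot V\le-\alpha(V)$, sandwiched by those bounds, gives the $\mathcal{KL}$ estimate of Definition~\ref{def-our-GAS}, i.e.\ GAS on $\mathcal{S}_2$. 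So all the real work is on the planar subsystem.

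First I would introduce the coordinate $\sigma:=2\tan\tfrac{\gamma}{2}$, a diffeomorphism of $\{|\gamma|<\pi\}$ onto $\mathbb{R}$ with $|\sigma|$ equal to the $\gamma$‑component of the metric on $\mathcal{T}_2$, and set $\xi:=q\sigma$. A direct computation, in which the feedforward/cross terms cancel thanks to $q^2k_3=k_1$, shows the closed loop takes the passive form
\[
\dot\delta = a\xi,\quad \dot\xi = -a\delta - k_2\xi,\quad a := qk_3\,\frac{1-\tan^2\tfrac{\gamma}{2}}{\bigl(1+\tan^2\tfrac{\gamma}{2}\bigr)^{2}},
\]
with $|a|\le qk_3$. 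In these variables $U=\delta^2+\xi^2$ and $V_{\delta\gamma}=k_3\bigl(1+\tfrac{q}{k_2}\bigr)U+\tfrac{k_3}{2qk_2}U^2+(\delta+\xi)^2$; since $|(\delta,\gamma)|_{\mathcal{T}_2}=|\delta|+|\xi|/q$ is equivalent to $\sqrt{U}$ and $V_{\delta\gamma}$ is, up to gain‑dependent constants, $U+U^2$, the required $\mathcal{K}_\infty$ sandwich is immediate.

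Next I would compute $\dot U=-2k_2\xi^2$ and expand $\dot V_{\delta\gamma}$, collecting it as
\[
\dot V_{\delta\gamma} = -2\bigl(k_2k_3+qk_3+k_2-a\bigr)\xi^2 - \tfrac{2k_3}{q}\xi^4 - 2qk_3\,h(\gamma)\,\delta^2 - 2k_2\,\xi\delta,
\]
where the sign‑indefinite term $-2a\delta^2$ produced by $\partial_\delta U\cdot\dot\delta$ has been merged with the cubic term $-\tfrac{2k_3}{q}\delta^2\xi^2$ coming from the $U^2$ part of $V_{\delta\gamma}$ into $-2qk_3\,h(\gamma)\,\delta^2$ with $h(\gamma)=4\tan^2\tfrac{\gamma}{2}+\tfrac12\cos\gamma(1+\cos\gamma)$. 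The crucial estimate — and the reason $V_{\delta\gamma}$ carries the state‑dependent weight $1+\tfrac{2q^2+U}{2qk_2}$ rather than a constant — is that $h$ stays bounded below by a positive constant even though $a$ changes sign at $|\gamma|=\pi/2$: Lemma~\ref{lemma:sinterm_upperbound} with $k=1$ gives $\cos\gamma(1+\cos\gamma)\ge 1-4\tan^2\tfrac{\gamma}{2}$, hence $h(\gamma)\ge\tfrac12+2\tan^2\tfrac{\gamma}{2}\ge\tfrac12$. Using this, $a\le qk_3$ so that $k_2k_3+qk_3+k_2-a\ge k_2(k_3+1)$, and discarding $-\tfrac{2k_3}{q}\xi^4\le0$, the remaining $2\times2$ quadratic form in $(\xi,\delta)$ is negative definite exactly because the gain condition $k_1k_3\ge k_2^2$ (equivalently $qk_3\ge k_2$) makes its discriminant negative; this yields $\dot V_{\delta\gamma}\le-\lambda U$ for a gain‑dependent $\lambda>0$, and inverting the upper $\mathcal{K}_\infty$ bound turns it into $\dot V_{\delta\gamma}\le-\alpha_{\delta\gamma}(V_{\delta\gamma})$ with $\alpha_{\delta\gamma}\in\mathcal{K}$ — precisely \eqref{eq:V_dot_delta_gamma}.

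I expect the main obstacle to be exactly this last sign analysis. Neither the naive energy $\tfrac12 U$ nor a constant‑weighted multiple of it decreases strictly: the cross term $-2k_2\xi\delta$ is indefinite, and the effective weight on $\delta^2$ (namely $a$) is \emph{positive} for $|\gamma|\in(\pi/2,\pi)$, so such an energy can grow there. Getting past this requires two ingredients working in tandem — (i) the quartic $\tfrac{k_3}{2qk_2}U^2$ in $V_{\delta\gamma}$, whose derivative converts the indefinite $-2a\delta^2$ into the \emph{sign‑definite} $-2qk_3h(\gamma)\delta^2$ via Lemma~\ref{lemma:sinterm_upperbound}, and (ii) the gain margin $k_1k_3\ge k_2^2$, which then closes the resulting quadratic form. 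Confirming that the $\dot V_{\delta\gamma}$ bookkeeping is exact — in particular the cancellation hinging on $q^2k_3=k_1$ and the merging that produces $h(\gamma)$ — is the only genuinely delicate part; everything else is routine and parallels Theorem~\ref{thm:backstepping_theorem}.
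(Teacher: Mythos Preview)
Your proposal is correct and its architecture matches the paper's exactly: verify that $V_{\delta\gamma}$ in \eqref{eq:CLF_Bolsa_delta_gamma} satisfies the hypotheses of Proposition~\ref{prop:composite_Lyap_function} on $\mathcal{T}_2$, then close via the same $\mathcal{KL}$ argument as in Theorem~\ref{thm:backstepping_theorem}. The execution of the key estimate $\dot V_{\delta\gamma}\le -\alpha_{\delta\gamma}(V_{\delta\gamma})$, however, is organized differently. The paper works with the pieces $U$, $\Pi_0=z^2$ (your $(\delta+\xi)^2$) and $\Pi_1$ \emph{sequentially}: it first bounds $\dot\Pi_0$ using Young's inequality $z\delta\le z^2/4+\delta^2$ and Lemma~\ref{lemma:sinterm_upperbound} applied with $k=\tfrac{qk_3}{2k_2}$, then adds multiples of $U$ and $U^2$ to absorb the residual positive terms, arriving at the bound \eqref{eq:V_dot_Bolsa}. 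You instead pass to the coordinate $\xi=2q\tan\tfrac{\gamma}{2}$, which exposes the skew-symmetric-plus-damping structure $\dot\delta=a\xi$, $\dot\xi=-a\delta-k_2\xi$, compute $\dot V_{\delta\gamma}$ in one shot, merge the sign-indefinite $-2a\delta^2$ with the $U^2$-induced cubic into $-2qk_3h(\gamma)\delta^2$, invoke Lemma~\ref{lemma:sinterm_upperbound} with $k=1$ to get $h\ge\tfrac12$, and finish with a $2\times2$ quadratic form whose negative definiteness is exactly where $k_1k_3\ge k_2^2$ (equivalently $qk_3\ge k_2$) enters. Your route is a bit more transparent about \emph{why} the quartic term and the gain condition are both needed, and by using Lemma~\ref{lemma:sinterm_upperbound} only at $k=1$ it stays strictly within the lemma's stated hypothesis; the paper's term-by-term build-up, on the other hand, makes the provenance of each piece of $V_{\delta\gamma}$ more visible.
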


\begin{proof}
We start by developing the feedback law \eqref{eq-control-bounded-in-gamma} and the CLF \eqref{eq:CLF_Bolsa_delta_gamma} for the $(\delta, \gamma)$-subsystem   \eqref{eq:unicycle_polar_closed_loop-Gv-3-pass}. 
We are going to use the following Lyapunov terms:
\begin{eqnarray}
U &=& \delta^2 +   q^2 V_0, \quad V_0 = 4 \tan^2\frac{\gamma}{2}\,,  \label{eq:U_bolsa} \\
\Pi_0 &=& z^2\,, \quad z = \delta + 2 q \tan\frac{\gamma}{2}.
\end{eqnarray}
The time derivative of the Lyapunov expression \eqref{eq:U_bolsa} along the solutions of \eqref{eq:unicycle_polar_closed_loop-Gv-3-pass} is
\begin{equation}
\dot U  =  \frac{8 q^2 \sin\gamma}{(1+\cos\gamma)^2} \left[ k_3 \delta \frac{(1+\cos\gamma)^2}{4} \cos\gamma 
- \tilde{\omega}\right]\,.
\end{equation}
We pick the control as in \eqref{eq-control-bounded-in-gamma}, where we use the fact that $(1+\cos\gamma)^2/4 = 1/(1+\tan^2 \gamma/2)^2$. With this feedback, we get
\begin{equation}
\dot U  = -2 \frac{k_2  q^2 4 \sin^2\gamma}{(1+\cos\gamma)^2} = - 2 k_2 q^2 V_0\,,
\end{equation}
and the dynamics for the LoS angle as
\begin{equation}
\dot\gamma = - k_2 \sin\gamma -  k_3 \delta\frac{(1+\cos\gamma)^2}{4} \cos\gamma  \,.
\end{equation}
Next, we consider the dynamics of the error $z$, which with
 $\sin(\gamma)\cos(\gamma) = \cos\gamma(1+\cos\gamma) \tan\frac{\gamma}{2}$ and $\frac{\sin\gamma}{1+\cos\gamma} =\tan\frac{\gamma}{2} $, are given by
\begin{equation}
\hspace*{-0.3cm}
\dot z = -k_2z + k_2\delta + k_3 q \frac{\cos\gamma(1+\cos\gamma)}{2}\left( 2 q \tan\frac{\gamma}{2}-\delta\right)\,.    
\end{equation}
Then,
\begin{equation}
\hspace*{-0.2cm}
\begin{aligned}[b]
\dot \Pi_0 = - 2k_2z^2 + 2k_2 z \delta &+ 
4 k_1 q \cos\gamma(1+\cos\gamma) \tan^2\frac{\gamma}{2} \\
&  
-k_3 q \frac{\cos\gamma(1+\cos\gamma)} {2}\delta^2\,.
\end{aligned}
\end{equation}
With $z\delta \leq  \frac{z^2}{4} + \delta^2$ and $\cos\gamma(1+\cos\gamma) \tan^2\frac{\gamma}{2} \leq  \frac{V_0}{2}
$, we get
\begin{equation}
\begin{aligned}[b]
    \dot \Pi_0 \le - &\frac{3}{2}k_2 \Pi_0 + 2 k_ 1 q V_0 \\
&+ 2 k_2 \delta^2\left(1- q \frac{k_3 }{2 k_2} 
\cos\gamma(1+\cos\gamma)\right)\,.
\end{aligned}
\end{equation}
It then follows from Lemma~\ref{lemma:sinterm_upperbound} with $k_1 k_3 \ge  k_2^2$ that 
\begin{equation}
\dot\Pi_0 
\leq - \frac{3}{2} k_2 \Pi_0 + 2 k_1 q  V_0 +   2 k_3 q \delta^2 V_0.
\end{equation}
Taking into account that $\frac{k_3 }{k_2} q \dot U = - 2k_2 q^2 V_0$, and $\frac{k_3}{2q k_2 } \dot{U}^2 = -2k_3 q \delta^2 V_0 - 2k_1 q V_0^2$ and denoting
\begin{equation}
\Pi_1 = \Pi_0 + \frac{k_3 }{k_2} q  U  + \frac{k_3 }{2 q k_2} U^2\,, \label{eq:Pi_1_bolsa}
\end{equation}
we get
\begin{equation}
\dot\Pi_1 \leq - \frac{3}{2}k_2\Pi_0 - 2 k_1 q V_0^2\,.
\end{equation}
In conclusion, from \eqref{eq:U_bolsa} and \eqref{eq:Pi_1_bolsa}, we construct the Lyapunov function for the $(\delta, \gamma)$-subsystem as $V_{\delta \gamma} = k_3 U + \Pi_1$ which is equivalent to \eqref{eq:CLF_Bolsa_delta_gamma} 
and has the time derivative \begin{equation}
\dot V_{\delta \gamma} \le  - 2k_1 k_2 V_0 - \frac{3}{2}k_2 \left(\delta + q \tan \frac{\gamma}{2}\right)^2 - 2k_1 q V_0^2
\label{eq:V_dot_Bolsa}
\end{equation} 
which is negative definite on $\mathcal{T}_2$. Analogous to the proof of Theorem~\ref{thm:backstepping_theorem}, we arrive at both statements of Theorem~\ref{thm:unicycle_CLF_BoLSA}.
\end{proof}

The controller \eqref{eq-control-bounded-in-gamma}
is bounded in the LoS angle $\gamma$ and we, consequently, refer to it as the Bounded-in-LoS Angle (BoLSA, pronounced `bolsa') controller. Combined with \eqref{eq-basic-v-control} and \eqref{eq-control-bounded-in-gamma}, BoLSA's region of attraction is the state space 
$\mathcal{S}_2$ defined in \eqref{eq:ss_S2},
which means all the initial headings except exactly away from the target. In other words, the algorithm achieves stable parking while, in the process, never ``turning its back'' against the target position.


\paragraph{BAgAl controller ($\mathcal{T}_3$)}

\begin{theorem}
\label{thm:unicycle_CLF_Bagal}
Consider the system \eqref{eq:unicycle_polar_closed_loop-Gv-1} in closed-loop with \eqref{eq-basic-v-control}, \eqref{eq-omega-general} and 
\begin{equation}
\label{eq-control-bounded-in-gamma-delta}
\tilde\omega = k_2\sin\gamma + \displaystyle
\frac{ 2k_3 \cos\gamma }{\left(1+\displaystyle\tan^2\frac{\gamma}{2}\right)^2} \left(1+\tan^2 \frac{\delta}{2} \right)\tan \frac{\delta}{2}\,,
\end{equation}
with  $k_1, k_2, k_3 > 0$ such that 
$k_1 k_3\geq k_2^2$.
The origin $\rho = \delta = \gamma = 0$ is GAS on $\mathcal{S}_3$ in accordance with Definition \ref{def-our-GAS}.
Furthermore, all the composite Lyapunov functions $V(\rho, \delta, \gamma) = \mathcal{V}(\rho^2, V_{\delta \gamma})$ and $V(\rho, \delta, \gamma) = \mathcal{V}(V_{\delta \gamma},\rho^2)$, for all functions $\mathcal{V}$ satisfying the conditions in Proposition \ref{prop:composite_Lyap_function}, and with $V_{\delta \gamma}$ defined as
\begin{equation}
\hspace*{-0.2cm}
\label{eq:V_CLF_Bagal}
   V_{\delta \gamma}(\delta, \gamma) =    a\left[ (1+U)^3 -1\right]+ 
\left (\tan\frac{\delta}{2} + q\tan\frac{\gamma}{2}\right)^2,
\end{equation}
where
\begin{align}
U=\tan^2 \frac{\delta}{2}+ q^2 \tan^2 \frac{\gamma}{2}, \quad
q = \sqrt{ \frac{ k_1}{k_3}}, 
\end{align} 
with $a = \max\{k_1 q, 2 \sqrt{k_1k_2}\}/ 3k_2 q^2$, are (globally) strict CLFs for \eqref{eq:unicycle_polar_closed_loop-Gv-1} with respect to the input pair $(v/\rho,\omega)$ in the sense of Definition \ref{def-CLF}. 
\end{theorem}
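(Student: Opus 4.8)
The plan is to mirror the structure of the proof of Theorem~\ref{thm:unicycle_CLF_BoLSA}, since the BAgAl controller \eqref{eq-control-bounded-in-gamma-delta} is obtained from BoLSA by the ``feedback-linearizing'' substitution $\delta \mapsto \tan\frac{\delta}{2}$ in the $\delta$-channel (compare the metrics on $\mathcal{T}_2$ versus $\mathcal{T}_3$). First I would introduce the shorthand $\tau_\delta = \tan\frac{\delta}{2}$ and $\tau_\gamma = \tan\frac{\gamma}{2}$, so that $U = \tau_\delta^2 + q^2\tau_\gamma^2$ and the backstepping error is $z = \tau_\delta + q\tau_\gamma$, and compute the time derivatives of $\tau_\delta$ and $\tau_\gamma$ along \eqref{eq:unicycle_polar_closed_loop-Gv-3-pass}, using $\dot\delta = \frac{k_1}{2}\sin(2\gamma)$ and $\dot\gamma = -\tilde\omega$, together with $\frac{d}{dt}\tan\frac{\delta}{2} = \frac12(1+\tau_\delta^2)\dot\delta$ and likewise for $\gamma$. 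The key design identity is that, in the $\tan\frac{\delta}{2}$ variable, the $\delta$-subsystem \eqref{eq:unicycle_polar_closed_loop-Gv-3n-pass} again becomes a ``bounded-gain integrator'' driven by a $\sin\gamma$-type output, so the same passivity cascade closes.

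Next I would organize the Lyapunov construction exactly as in BoLSA, through the intermediate quantities $V_0 = 4\tau_\gamma^2$, $\Pi_0 = z^2$, and a correction $\Pi_1 = \Pi_0 + (\text{polynomial in }U)$ chosen so that the $\dot U$ cross-terms telescope; the only difference is that the BoLSA quadratic correction $\frac{k_3}{k_2}qU + \frac{k_3}{2qk_2}U^2$ is replaced by the cubic-flavored term producing $a[(1+U)^3-1]$ in \eqref{eq:V_CLF_Bagal}, which is why $a$ carries the extra factor and the $\max\{k_1 q, 2\sqrt{k_1k_2}\}$. I would compute $\dot U$ (aiming for $\dot U = -2k_2 q^2 V_0$ after substituting \eqref{eq-control-bounded-in-gamma-delta}, using $(1+\tau_\gamma^2)^{-2} = (1+\cos\gamma)^2/4$), then $\dot\Pi_0$, bounding the indefinite cross-terms by Young's inequality ($z\tau_\delta \le \frac{z^2}{4} + \tau_\delta^2$) and by Lemma~\ref{lemma:sinterm_upperbound} applied with $k = k_3/k_2 \cdot(\text{stuff}) \ge 1$ under the hypothesis $k_1 k_3 \ge k_2^2$, to reach a clean inequality $\dot V_{\delta\gamma} \le -c_1 V_0 - c_2 z^2 - c_3\, \tau_\delta^2\, V_0 - c_4 V_0^2 - \dots$ that is negative definite on $\mathcal{T}_3$ with respect to the metric $|(\delta,\gamma)|_{\mathcal{T}_3} = 2\tau_\delta + 2\tau_\gamma$.

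Once $V_{\delta\gamma}$ is shown to be positive definite and radially unbounded on $\mathcal{T}_3$ (both via the $\tan(\cdot/2)$ metric: $V_0\to\infty$ as $|\gamma|\to\pi$, and the cubic term handles large $\delta$) with $\dot V_{\delta\gamma} \le -\alpha_{\delta\gamma}(V_{\delta\gamma})$ for some class-$\mathcal{K}$ function (obtained via \cite[Lemma~4.3]{khalil_nonlinear_2002} as in the earlier proofs), the remainder is verbatim the BoLSA argument: $V_{\delta\gamma}$ satisfies the hypotheses of Proposition~\ref{prop:composite_Lyap_function}, so every composite $\mathcal{V}(\rho^2, V_{\delta\gamma})$ and $\mathcal{V}(V_{\delta\gamma},\rho^2)$ is a strict CLF in the sense of Definition~\ref{def-CLF} with $\dot V \le -\alpha(V)$, and then \cite[Lemma~4.4]{khalil_nonlinear_2002} plus the sandwich $\bar\alpha_1(|\cdot|_{\mathcal{S}_3}) \le V \le \bar\alpha_2(|\cdot|_{\mathcal{S}_3})$ delivers the $\mathcal{KL}$ estimate, i.e.\ GAS on $\mathcal{S}_3$ per Definition~\ref{def-our-GAS}.

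The main obstacle I anticipate is the bookkeeping in $\dot\Pi_0$ and the choice of $a$: verifying that the cubic correction $a[(1+U)^3-1]$ (rather than the quadratic one used for BoLSA) is exactly what is needed to absorb the new cross-terms generated by the $(1+\tan^2\frac\delta2)$ factor in \eqref{eq-control-bounded-in-gamma-delta} — in particular that $\frac{d}{dt}\{a[(1+U)^3-1]\} = 3a(1+U)^2\dot U$ produces, after using $\dot U = -2k_2 q^2 V_0$, a term $-6a k_2 q^2(1+U)^2 V_0$ whose expansion $-6ak_2q^2(1 + 2U + U^2)V_0$ dominates the leftover $+2k_1 q V_0^2$, $+2k_3 q\,\tau_\delta^2 V_0$, and the $2\sqrt{k_1 k_2}$-type indefinite terms from completing the square in $\dot\Pi_0$; this is where the precise value $a = \max\{k_1 q, 2\sqrt{k_1 k_2}\}/(3k_2 q^2)$ must be pinned down. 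Everything else is routine once the $\tan(\cdot/2)$ substitution is in place.
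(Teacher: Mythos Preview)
The paper does not include a proof of Theorem~\ref{thm:unicycle_CLF_Bagal}; it is one of the results explicitly deferred to the extended journal version (see the remark at the end of the Introduction). So there is no paper proof to compare against directly. That said, your plan is exactly the natural one implied by the BoLSA proof and by the form of the CLF \eqref{eq:V_CLF_Bagal}: pass to the variables $\tau_\delta=\tan\frac{\delta}{2}$, $\tau_\gamma=\tan\frac{\gamma}{2}$, recover $\dot U \propto -\tau_\gamma^2$ after the cancellation $k_3 q^2=k_1$, bound $\dot\Pi_0$ via Young's inequality and Lemma~\ref{lemma:sinterm_upperbound}, and then use $3a(1+U)^2\dot U$ to absorb the new cross-terms generated by the $(1+\tau_\delta^2)$ factor that was absent in BoLSA. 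Your identification of why a \emph{cubic} in $U$ (rather than BoLSA's quadratic) is required, and how the constant $a=\max\{k_1 q,\,2\sqrt{k_1 k_2}\}/(3k_2 q^2)$ falls out, is on target.

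One small bookkeeping caution: because the BAgAl $U$ carries $q^2\tau_\gamma^2$ rather than BoLSA's $4q^2\tau_\gamma^2$, the identity you are ``aiming for'' is $\dot U=-2k_2 q^2\tau_\gamma^2$, i.e.\ $-\tfrac{1}{2}k_2 q^2 V_0$ with your convention $V_0=4\tau_\gamma^2$, not $-2k_2 q^2 V_0$ as in BoLSA. This factor-of-four discrepancy propagates into the choice of $a$ but does not alter the structure of the argument; you will simply catch it when you carry out the computation. Everything downstream (Proposition~\ref{prop:composite_Lyap_function}, then \cite[Lemmas~4.3--4.4]{khalil_nonlinear_2002} to get the $\mathcal{KL}$ estimate on $\mathcal{S}_3$) is indeed verbatim from the earlier proofs.
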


The controller \eqref{eq-control-bounded-in-gamma-delta}
is bounded with respect to polar angle $\delta$ and the LoS angle $\gamma$. We refer to it as the Bounding-Angles Algorithm Design (BAgAl, pronounced “bagel”) controller. Combined with \eqref{eq-basic-v-control} BAgAl's, region of attraction is the state space $\mathcal{S}_3$ defined in \eqref{eq:ss_S3}
which includes all initial headings except those facing exactly opposite of the target, and all positions except those on the nonnegative half of the $x$-axis. In other words, the algorithm achieves stable parking while never “turning its back” on the target position or crossing directly in front of it.

\section{Barrier CLFs and ``nearly global'' feedbacks}\label{sec:barrier}

\begin{figure}[t]
\centering
\includegraphics[width=0.5\linewidth]{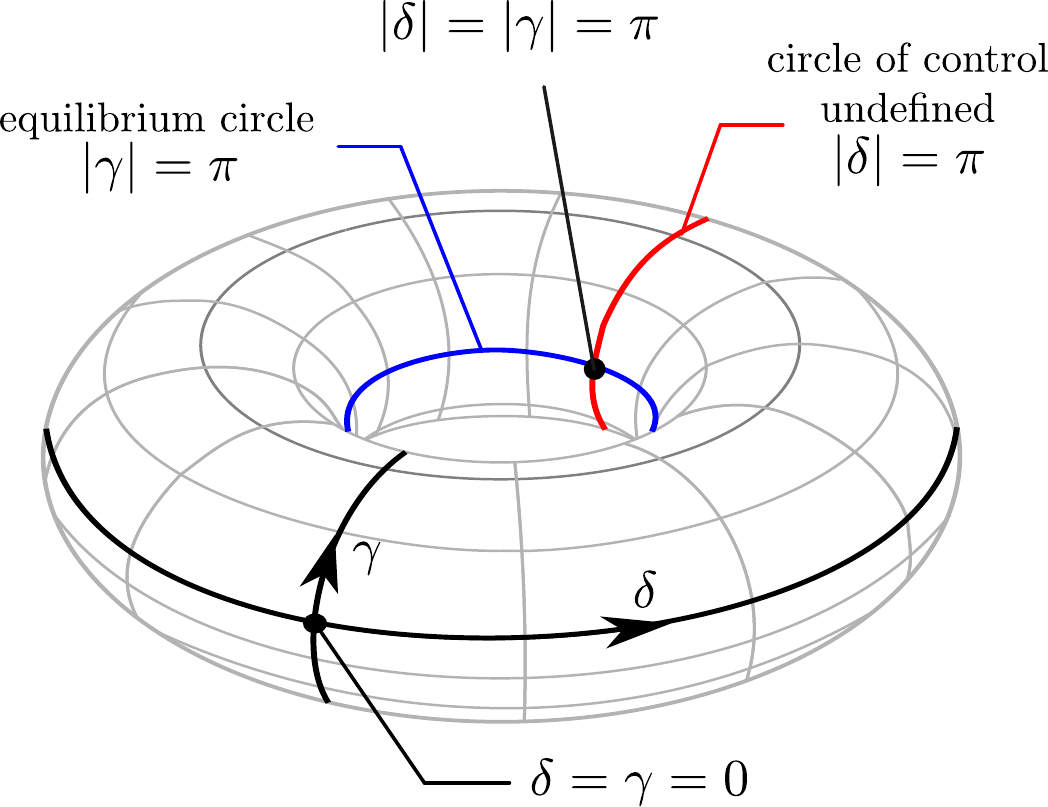}
\caption{Visualization of the torus $\{|\delta|\leq\pi\}\times \{|\gamma|\leq \pi\}$ with the undefined control set $\{|\delta|=\pi\}\times \{|\gamma|\leq \pi\}$ (red) and the equilibrium set $\{|\delta|<\pi\}\times \{|\gamma|= \pi\}$ (blue).}
\label{fig:torus}
\end{figure}

\paragraph{Barrier CLFs} The Lyapunov function \eqref{eq:CLF_Bolsa_delta_gamma}
blows up at $\gamma=\pm \pi$ and the Lyapunov function 
\eqref{eq:V_CLF_backstepping} with \eqref{eq:Delta_Barfli} blows up at $\delta=\pm \pi$.  Similarly, the Lyapunov function \eqref{eq:V_CLF_Bagal} blows up on the boundary of $\mathcal{T}_3$, the square $\{\delta=\pm\pi\}\cup \{\gamma=\pm\pi\}$. 
Such Lyapunov functions are called ``barrier Lyapunov functions'' and they ensure the invariance of the sub-level sets where $|\gamma|$ and $|\delta|$, respectively, are bounded. 

\paragraph{Feedback actions near and at the barriers} 

The three controllers employing barrier CLFs exclude the measure zero sets $\gamma=\pm \pi$, $\delta=\pm \pi$, and $\{\delta=\pm\pi\}\cup \{\gamma=\pm\pi\}$, from their respective regions of attraction. In \eqref{eq-control-bounded-in-gamma}, the bounded feedback term $\sin\gamma$ in relation to the {\em barrier} Lyapunov term $\tan^2\frac{\gamma}{2}$ creates unstable equilibria on the set $\gamma=\pm\pi$, while in \eqref{eq:backstepping_controllers} with \eqref{eq:Delta_Barfli}, the superlinear term $\left(1+\tan^2\frac{\delta}{2}\right)\tan\frac{\delta}{2}$ grows to infinity to prevent $\delta$ from reaching its boundary, precluding the existence of equilibria on the set $\delta = \pm \pi$. The combined law \eqref{eq-control-bounded-in-gamma-delta} incorporates both terms to keep both states within the interval $(-\pi,\pi)$. The substantially different behavior among the feedback terms, all of which relate to barrier Lyapunov functions of the form $\tan^2$, is because the control input $\omega$ affects $\gamma$ directly but influences $\delta$ only through an integrator (a difference of one relative degree with respect to $\omega$).
The feedback law \eqref{eq-control-bounded-in-gamma}, at $\gamma = \pm \pi$ and $\delta = 0$ (when the vehicle points away from both the target position and the target heading), drives the system toward the target position $\rho = 0$ through the reverse velocity feedback $v = -k_1\rho$, but does not achieve the target heading. In contrast, the feedback laws 
\eqref{eq:backstepping_controllers} with \eqref{eq:Delta_Barfli} and \eqref{eq-control-bounded-in-gamma-delta}, starting at $\delta=\pm \pi$, the steering input is $\omega = \pm \infty$, i.e., undefined. This is due to the topological impossibility of global stabilization on 
the cylinder $ S^1\times \mathbb{R}$ for \eqref{eq:backstepping_controllers} with \eqref{eq:Delta_Barfli} and on 
the torus $S^1 \times S^1$ for \eqref{eq-control-bounded-in-gamma-delta}. 


\paragraph{Topological behavior of BAgAl controller \eqref{eq-control-bounded-in-gamma-delta}} The value of the BAgAl controller \eqref{eq-control-bounded-in-gamma-delta} is undefined on the set $\{\rho>0\}\times \{|\delta|=\pi\}\times \{|\gamma|\leq \pi\} \ = \ \{x>0\}\cap \{y=0\}$, the red circle in Fig.~\ref{fig:torus} which is a part of the boundary of the open state space $\mathcal{T}_3$ and a measure zero subset of the ``full configuration space'' $\mathcal{W} := \{\rho>0\}\times \{|\delta|\leq\pi\}\times \{|\gamma|\leq \pi\} \ = \ \{x^2+y^2 > 0\}$. The set $\{|\delta|<\pi\}\times \{|\gamma|= \pi\}$, the blue circle in Fig~\ref{fig:torus} which is also a part of the boundary of the open state space $\mathcal{T}_3$ and a measure zero subset of the torus $\{|\delta|\leq\pi\}\times \{|\gamma|\leq \pi\}$, is an equilibrium set of \eqref{eq:unicycle_polar_closed_loop-Gv-3-pass}, \eqref{eq-control-bounded-in-gamma-delta}, on which the vehicle faces away from the positional target but may be anywhere except $x>0,y=0$. The set $\{\rho>0\}\times \{|\delta|<\pi\}\times \{|\gamma|= \pi\} \ = \ \{x<0\}\cup \{y\neq 0\}$, which is of measure zero in $\mathcal{W}$, consists of straight-line trajectories of \eqref{eq:unicycle_polar_closed_loop-Gv-1}, \eqref{eq-basic-v-control}, \eqref{eq-omega-general}, \eqref{eq-control-bounded-in-gamma-delta} which ``back up'' to the target position with $\theta\neq 0$. From the entire configuration space $\mathcal{W}$, 
only the measure zero subset $\{\rho>0\}\times (\{|\delta|=\pi\}\cup \{|\gamma|= \pi\})$, which are the intersecting blue and red circles on the torus in Fig.~\ref{fig:torus}, is excluded from the region of attraction of $\rho=\delta=\gamma=0$. 

\begin{figure}[t]
\centering
\includegraphics[width=0.9\linewidth]{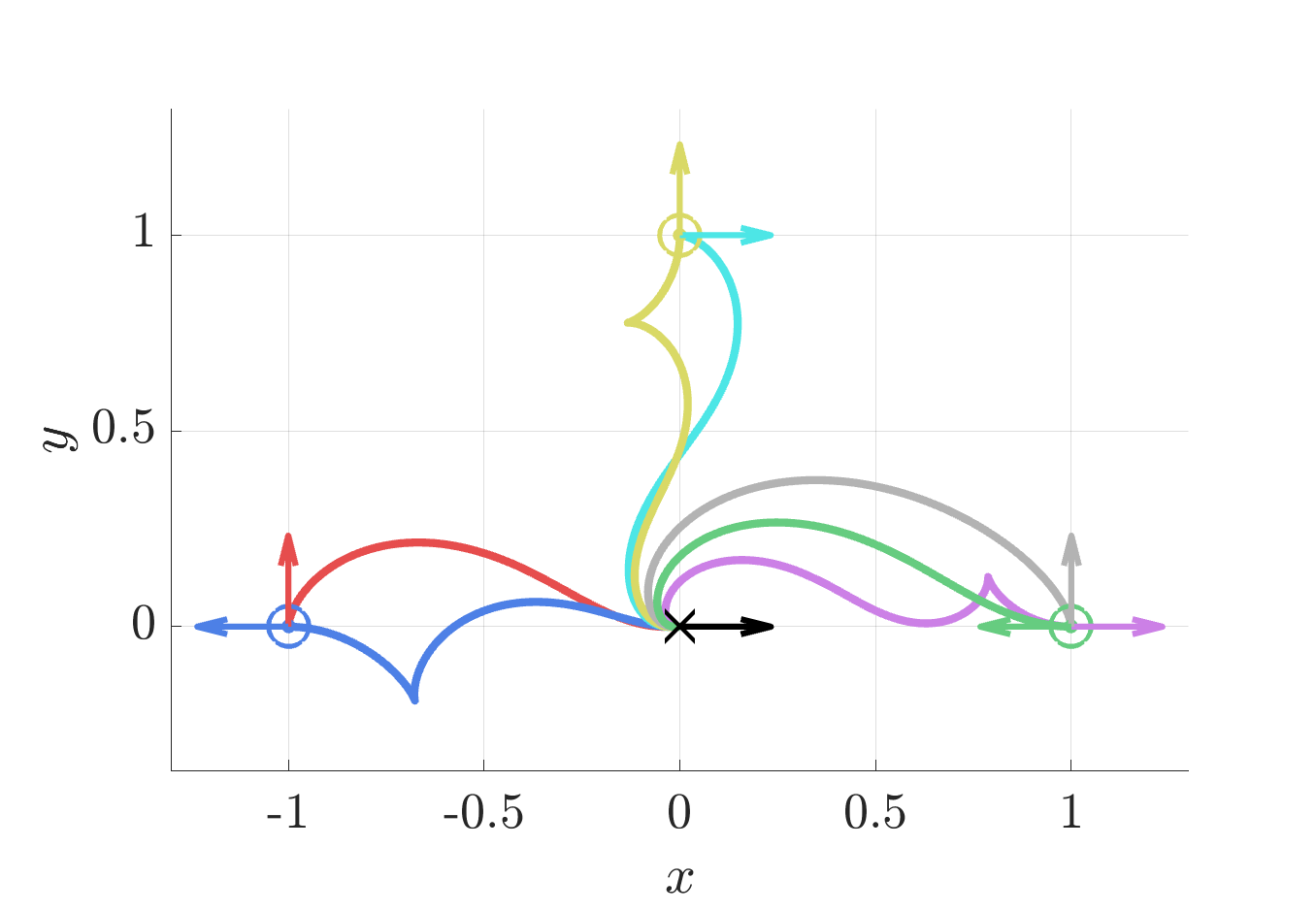}
\caption{Closed-loop trajectories of GloBa~\eqref{eq:backstepping_controllers} with \eqref{eq:Delta_globa}.}
\label{fig:genova_vs_globa}
\end{figure}


\begin{figure}[t]
\centering
\includegraphics[width=0.8\linewidth]{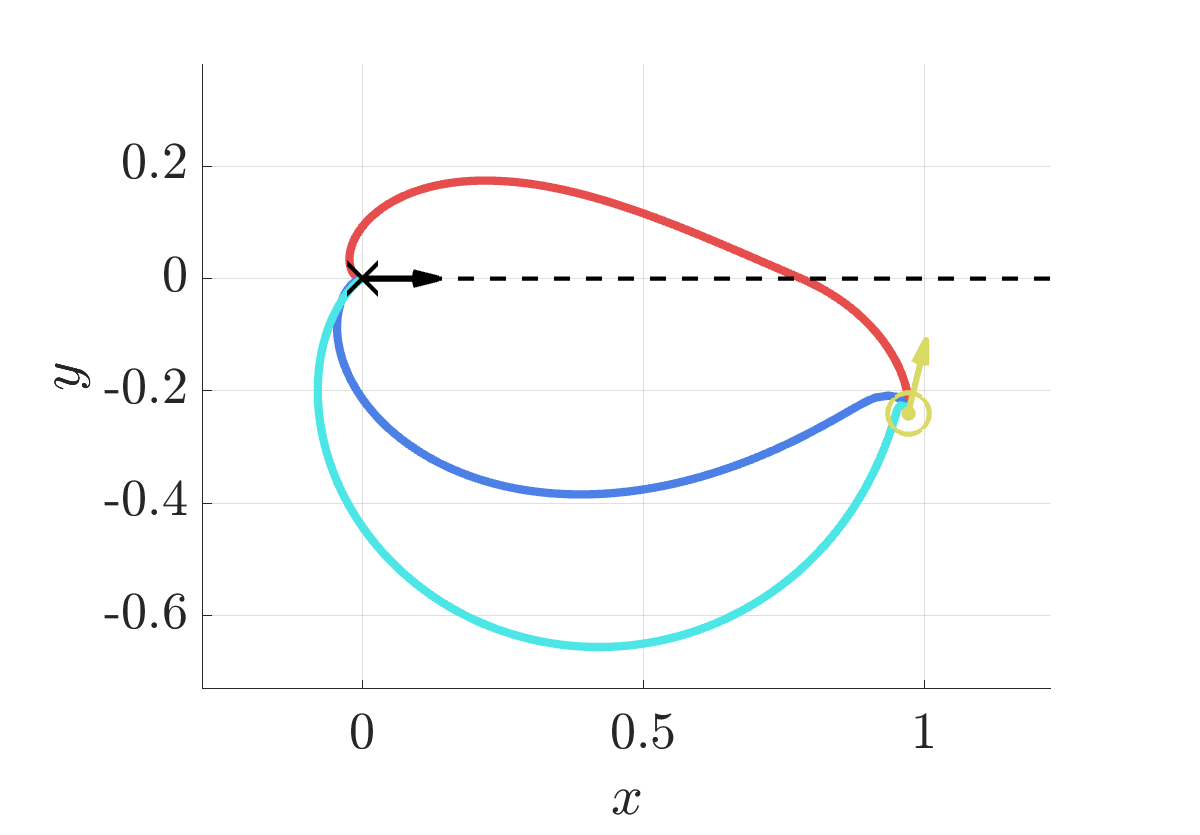}
\caption{Closed-loop trajectories with BAR-FLi (blue) controller~\eqref{eq:backstepping_controllers} with \eqref{eq:Delta_Barfli} and BAgAl (cyan) controller~\eqref{eq-control-bounded-in-gamma-delta} compared to the GloBa controller~~\eqref{eq:backstepping_controllers} with \eqref{eq:Delta_globa} (red).}
\label{fig:barfli_abound}
\end{figure}

\section{Comparing trajectories}

Throughout the simulation results, the target position and heading angle are shown in black.
 We show the unconstrained GloBa~\eqref{eq:backstepping_controllers} with \eqref{eq:Delta_globa} with all gains set to one in Fig.~\ref{fig:genova_vs_globa}. 
 Next, Fig.~\ref{fig:barfli_abound} compares BAR-FLi~\eqref{eq:backstepping_controllers} with \eqref{eq:Delta_Barfli} (blue) and BAgAl (cyan) against GloBa (red), with gains $(k_1,k_2,k_3,k_4) = (1,1,0.1,1)$. Here, both BAR-FLi and BAgAl avoid crossing in front of the target, consistent with the intended effect of the barrier CLFs on $\delta$. Since the trajectories by BAgAl and BoLSA \eqref{eq-control-bounded-in-gamma} are essentially identical, BoLSA is omitted from the plots.

\bibliographystyle{IEEEtranS}
\bibliography{root}

\clearpage

\end{document}